\begin{document}

\title{Sequential Change Detection in the Presence of Unknown Parameters}
 
 \author{Gordon J. Ross}

\institute{Gordon J. Ross \at
              Heilbronn Institute for Mathematical Research\\
              University of Bristol, Bristol, United Kingdom\\
              Tel.: +123-45-678910, Fax: +123-45-678910\\
              \email{gordon.ross@bristol.ac.uk}          
    }

\date{Received: date / Accepted: date}
\maketitle

\begin{abstract}It is commonly required to detect change points in sequences of random variables. In the most difficult setting of this problem, change detection must be performed sequentially with new observations being constantly received over time. Further, the parameters of both the pre- and post- change distributions may be unknown. In \cite{Hawkins2005a}, the sequential generalised likelihood ratio test was introduced for detecting changes in this context, under the assumption that the observations follow a Gaussian distribution. However, we show that the asymptotic approximation used in their test statistic leads to it being conservative even when a large numbers of observations is available. We propose  an improved procedure which is more efficient, in the sense of detecting changes faster, in all situations. We also show that similar issues arise in other parametric change detection contexts, which we illustrate by introducing a novel monitoring procedure for sequences of Exponentially distributed random variable, which is an important topic in time-to-failure modelling.

\keywords{Change Detection \and Statistical Process Control \and Sequential Analysis \and Control Charts \and Generalised Likelihood Ratio}

\end{abstract}

\section{Introduction}

Change detection problems, where the goal is to monitor for distributional shifts in a sequence of time-ordered observations, arise in many diverse areas such as the segmentation of speech signals in audio processing \citep{Andre-Obrecht1988}, RNA transcription analysis in biology \citep{Caron2012}, and intrusion detection in computer networks. \citep{Tartakovsky2005,Levy-Leduc2009}. They have been especially studied within the field of statistical process control (SPC) where the goal is  to monitor the quality characteristics of an industrial process in order to detect and diagnose faults \citep{Lai1995,Hawkins1998}. 

In a typical setting, a sequence of observations $x_1$, $x_2$, $\ldots$ are received from the random variables $X_1, X_2,\ldots$. A number of abrupt change points  $\tau_1, \tau_2, \ldots$ divide the sequence into segments, where the observations within each segment are independent and identically distributed. The sequence is hence distributed as:
\begin{equation}
X_i \sim  \left\{ \begin{array}{rl}
 F_0 &\mbox{ if $i \leq \tau_1$} \\
 F_1 &\mbox{ if $\tau_1 < i \leq \tau_2$} \\
 F_2 &\mbox{ if $\tau_2 < i \leq \tau_3$}, \\
\ldots
       \end{array} \right.
\label{eqn:framework}
\end{equation}
for some set of distributions $\{F_0, F_1, \ldots\}$. The goal is  to estimate the location of the change points. Although the assumption of independent observations between change points may seem restrictive, this is not the case since a statistical model can usually be fitted to the observations to model any dependence, with change detection then being performed on the independent residuals. An extensive discussion of this topic can be found in \cite{Gustafsson2000}.

There are two  different versions of the change detection problem. In the \textbf{batch} version, the sequence has a fixed length consisting of $n$ observations. Change detection is performed retrospectively using the whole sequence at once (e.g. \cite{Hinkley1970a}). In the \textbf{sequential} version, the sequence does not necessarily have a fixed length.  Instead, observations are received and processed in order. The sequence is monitored for changes and, after each observation has been received, a decision is made about whether a change has occurred based only on the observations which have been received so far \citep{Lai1995}. If no change is flagged, then the next observation in the sequence is processed, and so on. The rate at which observations arrive imposes computational constraints on change detection algorithms, with a typical requirement being that algorithms should be at worst O(n), but preferably O(1). These two settings are known in the SPC literature as Phase I and Phase II respectively, and Phase II change detection algorithms are commonly referred to as control charts.

Our concern is the sequential Phase II setting. One advantage of this setting is that it only requires a single change point to be detected at any given time, which avoids most of the computational complexity associated with detecting multiple change points. We  therefore refer to $F_0$ and $F_1$ as the pre- and post-change distributions respectively, with the single change point being denoted as $\tau$.  

In typical SPC applications it can be assumed that the parametric forms of $F_0$ and $F_1$ are known, and the Gaussian case where $F_0 = N(\mu_0, \sigma^2_0)$ and $F_1 = N(\mu_1,\sigma^2_1)$ is of particular interest for SPC.  Many existing procedures focus only on monitoring for changes  in the mean of such a sequence, however we concern ourselves with the more general case where either the mean and variance may undergo change. Existing approaches for this problem differ in what is assumed to be known about the pre- and post- change means and variances. In the utopian case where all these parameters are known exactly, the minimax optimal sequential change detection procedure is the well-known CUSUM chart \citep{Hawkins1998}. However in most realistic settings these parameters are unknown, and \cite{Jensen2006} showed that naive attempts to estimate them can lead to poor performance. 

Until recently, the standard frequentist approach when working with unknown pre-change parameters was the self starting CUSUM chart discussed in \cite{Hawkins1998}, which adapts the CUSUM to situations where distributional parameters are unknown. However, this CUSUM chart suffers from requiring knowledge of the post-change parameters, which is generally unavailable. To alleviate this, \cite{Hawkins2005a}  (subsequently referred to as HZ) recently proposed a new approach to sequential change detection for Gaussian based on a repeated series of generalised likelihood ratio tests. Their approach was shown to perform favourably compared to the self starting CUSUM, and can thus be considered the current state-of-the-art for frequentist sequential Gaussian monitoring.

In this paper, we present a new change detection algorithm which improves on their proposal. The test statistic used in HZ relies on maximising over a collection of likelihood ratio statistics, which are each marginally assumed to be approximately  $\chi^2_2$ distributed. However this approximation only holds asymptotically. Due of the peculiarities of the sequential change detection setting, this asymptotic result is never achieved even when the number of available observations is very large. This results in their procedure being somewhat conservative, which reduces its ability to detect changes quickly. This is undesirable, since changes must be detected as fast as possible in typical SPC application. We introduce a different statistic which avoids this problem, and results in quicker detection of every type of  change. 

We also discuss how the framework introduced by HZ can be used for parametric monitoring in more general situations where the distributional form of $F_0$ and $F_1$ may be non-Gaussian. In this setting the same problem relating to the failure of asymptotic assumptions also arises, and performance can again be improved if a suitable correction is made. We illustrate this phenomena by introducing a novel statistic for detecting changes in sequences of Exponentially distributed random variables. Control charts for the Exponential distribution are of interest to SPC due to their use in monitoring the time between failures generated by high yield processes \citep{Khool2009, Liua2007, Chan2000}, and our proposal extends this work by not requiring prior knowledge of either the pre- or post- change distributional parameter.

The remainder of the paper proceeds as follows: in Section \ref{sec:detection} we summarise the sequential change detection framework of HZ. Then, in Section \ref{sec:finite} we discuss some limitations of their approach which results in slower change detections. Based on this, we formulate a new test statistic which corrects these issues. Section \ref{sec:exponential} introduces a new statistic for detecting changes in sequences of Exponential random variable. Finally Section \ref{sec:experiments} compares the performance of the test statistics with and without finite sample corrections,  across a range of change detection scenarios and also includes a comparison to recently proposed Bayesian methods for sequential change detection.

\section{Change Detection}
\label{sec:detection}

The procedure described in HZ extends the standard Phase I generalised likelihood ratio test of \cite{Hinkley1970a}  to sequential monitoring. First consider the (non sequential) Phase I setting where there is a fixed size sample of $n$ Gaussian observations $x_1,\ldots,x_n$ containing at most one change point. To test whether a change point occurs at some particular location $\tau=k$, the observations are divided into two samples $\{x_1,\ldots,x_k\}$ and $\{x_{k+1},\ldots,x_n\}$, and a likelihood ratio test is used to assess whether these samples have equal  means and variance. In this case the null hypothesis is:
\[H_0: X_i \sim N(\mu_0, \sigma_0^2) \quad \forall i.\]
while the alternative hypothesis that there is a single change point at  $k$ is:
\[H_1: X_i \sim  \left\{ \begin{array}{rl}
N(\mu_0, \sigma_0^2) &\mbox{ if $i \leq k$} \\
N(\mu_1, \sigma_1^2) &\mbox{ if $i > k$.} \\
       \end{array} \right.
\]

In both cases, the parameters $\mu_0$, $\sigma_0^2$, $\mu_1$ and $\sigma_1^2$ are unknown and must be estimated. Writing $L_0$ and $L_1$ for the respective likelihoods under the null and alternative hypothesis, and letting $D_{k,n} = -2\log(L_0/L_1)$, the standard likelihood ratio test statistic in this situation can be written as:
\begin{equation}
D_{k,n} =  k  \log \frac{S_{0,n}}{S_{0,k}} + (n-k) \log  \frac{S_{0,n}}{S_{k,n}},
\label{eqn:lrt}
\end{equation}
where:
\[S_{r,s} = \sum_{i=r+1}^{s} (x_i - \bar{x}_{r,s})^2/(s-r), \quad \bar{x}_{r,s} = \sum_{i=r+1}^{s} x_i/(s-r),\]
and we note that $S_{r,s}$ is the (biased) maximum likelihood estimate of the variance. Under the null hypothesis, observations from both samples are identically distributed and $D_{k,n}$ has an asymptotic chi-square distributed with two degrees of freedom. This allows critical values to be computed, with the null hypothesis being rejected if $D_{k,n}$ exceeds a given value. 

Of course, in practice it will not be known which value of $k$ to use as the change point location in the above test. Therefore, it is usual to treat $k$ as a nuisance parameter and estimate it via maximum likelihood. This leads to the following generalised likelihood ratio test statistic, which tests whether a change occurs at any point in the sequence:
\begin{equation}
D_n = \max_k D_{k,n}, \quad 2 \leq k \leq t-2.
\label{eqn:maximisation}
\end{equation}

It is concluded that the sequence contains a change point if $D_n > h_n$ for some appropriately chosen threshold $h_t$. The estimate of the change point is then the value of $k$ for which $D_{k,n}$ is maximal.

This procedure assumes that the sample $x_1,\ldots,x_n$ has a fixed length. However in many applications, such as those commonly encountered in Phase II SPC,  this is not the case and  new observations are  received over time. In this case monitoring for a change must be performed sequentially,]with a decision about whether a change has occurred being taken after every observation. The above framework can be extended to this situation by processing the observations sequentially, starting with the first. For each observation $x_t$, the statistic $D_t$ is computed using only this observations and the previous ones, i.e. $x_1,\ldots,x_t$. If $D_t > h_t$ then a change is flagged, otherwise the next observation $x_{t+1}$ is processed, and $D_{t+1}$ is computed, and so on. This procedure is hence a sequence of generalised likelihood ratio tests, and allows  sequences containing multiple change points to be processed without a high computational burden, assuming that previous observations are discarded and the procedure is restarted whenever a change is detected.

In order for this procedure to be feasible, it must be possible to compute the $D_t$ statistics without incurring too great a computational cost. As discussed in \cite{Hawkins2005a}, the likelihood ratio statistics $D_{k,t}$ can be written in terms of the sufficient statistics for estimating the mean and variance of a Gaussian distribution, and these admit a simple recursively updatable form. Therefore, computing $D_{k+1,t}$ given $D_{k,t}$ can be performed very fast. One problem which can arise is that the number of likelihood ratio tests performed when each observation is processed grows linearly over time, since the calculation of $D_t$ requires the computation of $D_{2,t}$, $D_{3,t}$,$\ldots$, $D_{t-2,t}$. This means that the algorithm will have a quadratic $O(n^2)$ time complexity. However, a windowing procedure can be used to drastically reduce the number of statistics which must be computed, giving an $O(n)$ algorithm. Crucially, this windowing procedure need not lead to any drop in performance, since older observations can be incorporated into fixed size summery statistics rather than being discarded. This is described in more detail in \cite{Hawkins2005a}.

The other key issue is determining the sequence of thresholds $\{h_t\}$ in a way which takes into account that multiple highly correlated tests are being performed. The procedure used by HZ is to choose this sequence so that the probability of incurring a false positive is constant over time, i.e. assuming that no change has occurred, choose the thresholds so that:

\begin{equation}
\begin{split}
P(D_t  > h_t | D_{t-1} \leq h_{t-1}, & D_{t-2}  \leq h_{t-2}, \\& \ldots, D_1 \leq h_1) =  \gamma, \quad \forall t.
\end{split}
\label{eqn:conditional}
\end{equation}
In SPC, it is common to design change detection algorithms such that, assuming there has been no change, there is a hard bound on the expected number of observations until a false positive is signaled. The expected number of observations before a signal is known as the Average Run Length ($ARL_0$) and in this case it is clear that the $ARL_0$ is equal to $1/\gamma$. However, the analytic form of the marginal distribution of $D_t$ is only known asymptotically, and the conditional distribution in Equation \ref{eqn:conditional} is more complex and does not have a known form, even asymptotically. Therefore, a Monte Carlo procedure can instead be used to generate the thresholds. By simulating several million sequences of independent $N(0,1)$ random variables, the thresholds corresponding to various choices of $\gamma$ can be determined empirically. Although this procedure is computationally expensive, it need only be carried out a single time. As the null distribution of $D_{k,t}$ is independent of the unknown mean and variance of the observations, the computed thresholds will give the required $ARL_0$ for any Gaussian sequence, and can hence be stored ahead of time in a lookup table, so that no extra computational cost is added when processing the sequence. We discuss this procedure further in the following section, after first describing our new test statistic.

\section{Finite Sample Correction}
\label{sec:finite}

\begin{figure*}[t]
  \centering
  \subfloat[Values of $H_{k,50}$ (dotted line) and $D^c_{k,50}$ (solid line)]{\label{fig:finite1}\includegraphics[width=0.45\textwidth]{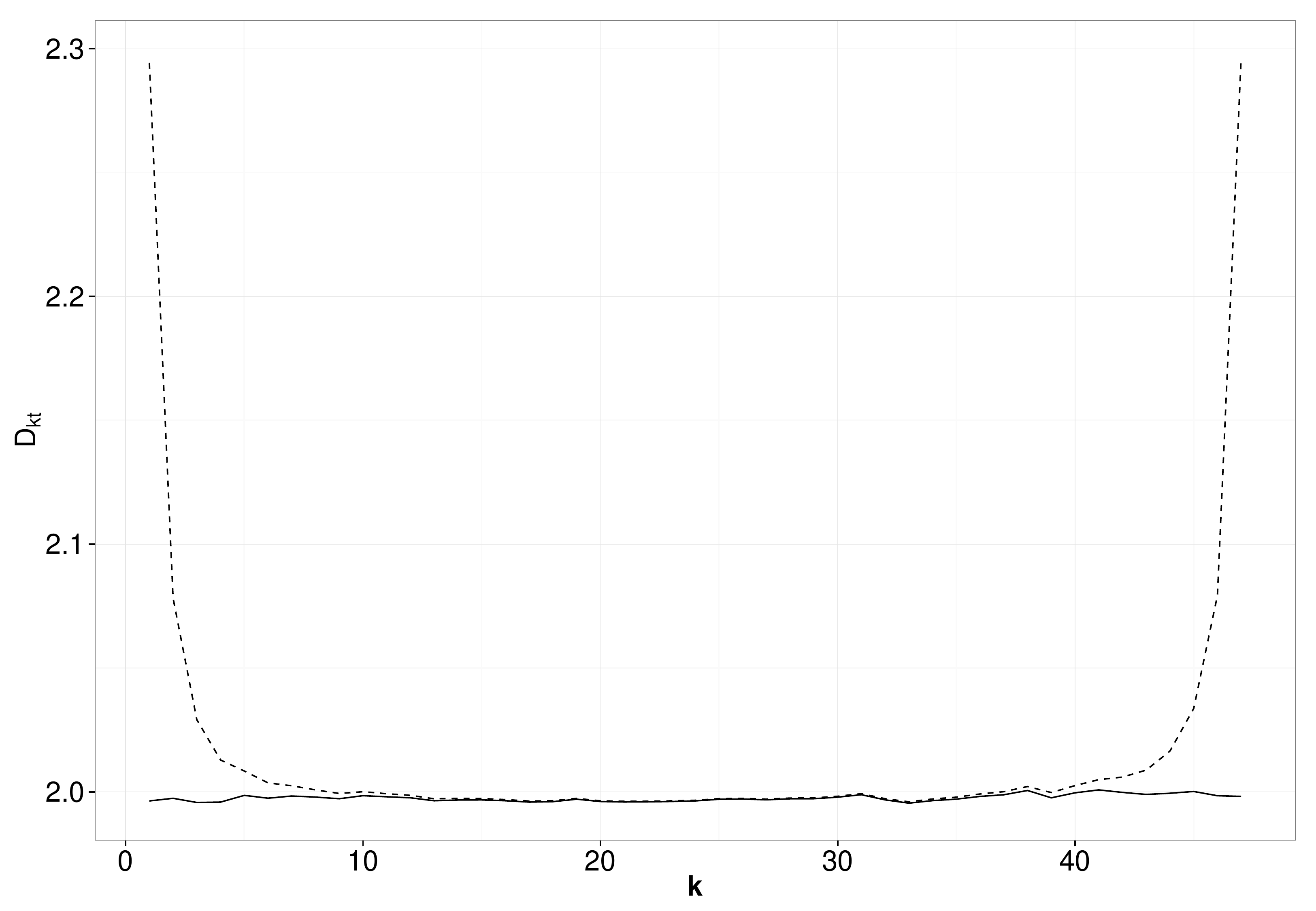}}
  \subfloat[Values of thresholds $h_t$ for HK chart (dotted line) and our proposal (solid line)]{\label{fig:finite2}\includegraphics[width=0.45\textwidth]{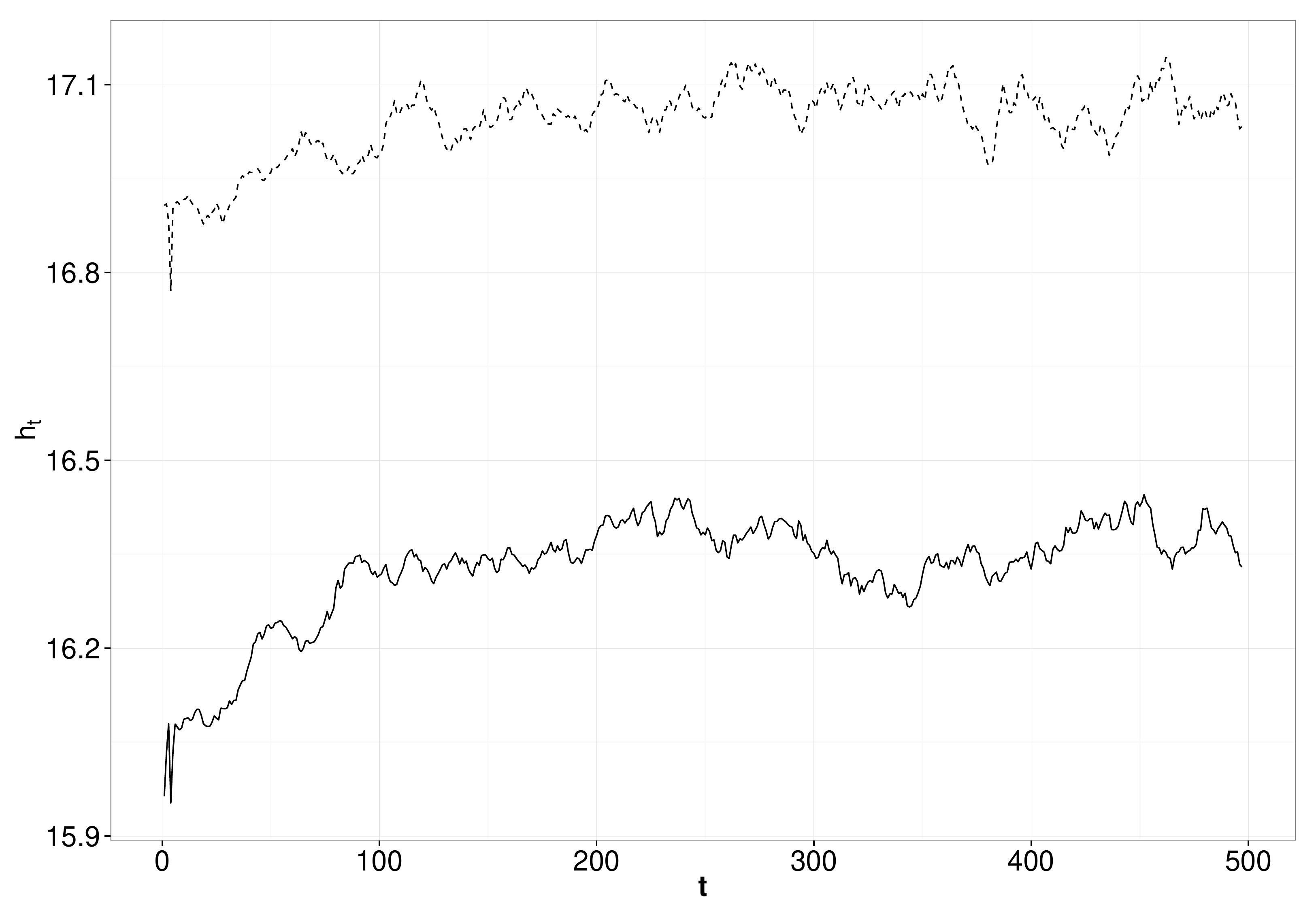}}
  \label{fig:mds}
\end{figure*}

A limitation of the  above HZ procedure is that, while the likelihood ratio test statistics $D_{k,t}$ in Equation \ref{eqn:lrt} are each asymptotically $\chi_2^2$ distributed under the null hypothesis as the size of both samples grows, their distribution will differ from this in a finite sample setting. This is problematic since $D_t$ is defined by maximising over $D_{k,t}$, which means that if some values of $D_{k,t}$ have a higher mean and/or variance than others, they will tend to dominate the maximisation. This can lead to the thresholds $h_t$ being artificially high, which reduces the power of the test and, in a sequential context, increases the length of time taken to detect change points. To reduce the impact of this, HZ make use of a Bartlett correction  in order to increase the rate at which the $D_{k,t}$ statistics converge to $\chi_2^2$. Bartlett corrections are motivated by the well known result that dividing a likelihood ratio test statistic by its expected value under the null hypothesis gives a transformed statistic which converges to $\chi_2^2$ at a faster rate. In HZ, an approximate correction is used where the test statistic is divided by a constant $C_{k,t}$, leading to a new test statistic $H_{t}$ where
\[H_t =  \max_k H_{k,t}, \quad H_{k,t} = D_{k,t} /C_{k,t},\]
and $C_{k,t}$ is the Bartlett correction factor:
\[C_{k,t} = 1 + \frac{11}{12}\left(\frac{1}{k} + \frac{1}{t-k} - \frac{1}{t}\right) + \left(\frac{1}{k^2} + \frac{1}{(t-k)^2} - \frac{1}{t^2}\right).\]

However, this does not fully resolve the issue. Ordinarily, such an approximate correction would result in a test statistic which is approximately $\chi_2^2$ distributed for moderate sized samples, but in the change point setting, this is not the case. The problem is that the maximization over $k$ in Equation \ref{eqn:maximisation} leads to $D_{k,t}$ being computed for small values of $k$, such as $k < 5$ (or symmetrically when $k > t-4$), even when the number of observations $t$ is large. Therefore, one sample will contain a very small number of observations regardless of how large $t$ is, and the test statistic may hence differ substantially from its asymptotic distribution. To quantify this, consider the expected value of $H_{k,t}$ under the null hypothesis when a large number of observations have been received and $t \rightarrow \infty$ but $k$ remains small. It can be shown (see Appendix) that:

\begin{align*}
E[D_{k,t}] =  t& (\log(2/t) + \psi( (t-1)/2)) \\& - k(\log(2/k) + \psi((k-1)/2)) \\& - (t-k)(\log(2/(t-k)) + \psi((t-k-1)/2)),
\end{align*}
where $\psi(z) = \Gamma ' (z) / \Gamma(z)$ is the digamma function. For large values of $t$, $E[D_{k,t}]  \rightarrow_t \log(2/k) + \psi( (k-1)/2)$. Similarly, the value of the Bartlett correction factor $C_{k,t}$ as $t \rightarrow \infty$ is $1 + 11/12 k^{-1} + k^{-2}$ hence (by Slutsky's theorem):
\[E[H_{k,t}] \rightarrow_t  \frac{\log(2/k) + \psi( (k-1)/2)}{1 + 11/12 k^{-1} + k^{-2}}.\]

A $\chi_2^2$ random variable has an expected value of $2$, and for $k \in \{2,3,4,5,6\}$ the corresponding asymptotic values of $E[H_{k,t}]$ are $\{2.30, 2.08,  2.03, 2.02, 2.01\}$ which shows that the approximation fails for small values of $k$. This means that the $H_{t}$ statistic used in HZ has a substantially higher expected value under the null hypothesis than it should otherwise have, due to these large expected values at the boundary values of $k$. This leads to their threshold sequence $h_t$ being inflated, which can cause changes to be detected slower, since the higher $h_t$ thresholds take longer to be breached after a change occurs.

Figure \ref{fig:finite1}  illustrates this effect by plotting the expected values of $H_{k,t}$ when $t=50$. The spike at the end when $k<5$ (and symmetrically when $k > 45$) is clearly visible. One possible way to solve this problem would be to only carry out the maximization of $H_{k,t}$ over a smaller range of values, such as $5 \leq k \leq t-4$. However this is not practical in the sequential setting. When a change occurs in the sequence, it must generally be detected as quickly as possible. This means that ideally the number of post-change observations that need to be processed before the change is detected should be small, which implies that the largest $H_{k,t}$ values should occur when only a small number of post-change observations are split off.  By not including the $k > t-4$ terms in the maximization, the ability to detect changes quickly is hence reduced, and these slower detections can be a serious problem in practice.

We therefore instead propose using a new set of statistics $D^c_{k,t}$ for change detection using a better correction to the likelihood ratio test. It is well known (for example \cite{Jensen1993}) that if $\Lambda$ denotes a log likelihood ratio test with an asymptotic $\chi_q^2$ distribution, then convergence can be improved by instead working with $(q\Lambda)/E[\Lambda]$, which converges to $\chi_q^2$ at a faster rate. This motivates using the following finite-sample corrected test statistics:

$$D^c_{k,t} = \frac{2D_{k,t}}{E[D_{k,t}]}, \quad D^c_t = \max_k D^c_{k,t}$$

where $E[D_{k,t}]$ is defined as above. Figure \ref{fig:finite1} shows the expected values of these $D^c_{k,t}$ statistics when $t=50$, and it can be seen that the small sample spike no longer exists at the boundaries. Using this new $D^c_{k,t}$ statistic, we computed threshold sequences $h_t$ corresponding to various values of the $ARL_0$ by simulation, using the Monte Carlo approach described in the previous section. For several different choices of the $ARL_0 = 1/\gamma$, we generated $2$ million random sequences of Gaussian variables and computed empirically the values of $h_t$ which would give such an $ARL_0$. In order to reduce the sampling variation that occurs when simulating such threshold sequences, the sequences are then exponentially smoothed using the formula $\tilde{h}_t = 0.7\tilde{h}_{t-1} + 0.3h_t$. The Appendix contains examples of such sequences for some of the more commonly used $ARL_0$ values; note that we have restricted the monitoring procedure to begin after the $20^{th}$ observation, on the grounds that when fewer observations are available it becomes increasingly difficult to detect changes. 

However, working with such raw sequences is cumbersome, so in order to allow practitioners to more easily use our algorithm we provide the following approximate equation relating $h_t$ to various values of $\gamma$, which was found by fitting a non-linear regression model to the generated sequences:
\begin{equation}
h_t =    1.51 - 2.39 \log(\gamma) + \frac{3.65 + 0.76\log(\gamma)}{\sqrt{t-7}}. 
\end{equation}
It is interesting to compare the value of these thresholds to the  thresholds when using the HZ procedure.  In Figure \ref{fig:finite2}, the dotted line shows the values of the $h_t$ statistic corresponding to an $ARL_0$ of $500$ for HZ, while the $h_t$ values for the same $ARL_0$  when using our statistic are plotted as a solid line. It can be seen that the thresholds when using our statistic are lower than when using that of HZ,  since they are not being distorted by the extreme values associated with the $k < 5$ cases. We will show in Section \ref{sec:experiments}  that this leads to the faster detection of all types of change.

\section{Change Points in Exponentially Distributed Sequences}
\label{sec:exponential}


Although the above discussion has focused on detecting changes in Gaussian sequences, the general change point model framework can be used for change detection in other parametric contexts where unknown parameters are present. In most cases the approach will be identical to the above, with a test statistic derived from the two-sample likelihood ratio test being maximized over every possible split point in the sequence, As in the above Gaussian discussion, there may also be a need to make a finite sample corrections to the test statistics in order to prevent large values in the small segments distorting the maximization. To illustrate this, we now develop a test statistic to monitor for changes in a sequence of Exponential random variables when both the pre- and post-change parameters are unknown. As discussed in the introduction, control charts for the Exponential distribution are widely used to monitor for changes in the expected time between failures in SPC situations.

As in the previous section,  we begin with a Phase I setting with $n$ Exponentially distributed observations $x_1,\ldots,x_n$ containing at most one change point. To test whether a change point occurs at location $\tau = k$, the sequence is split into the two samples $\{x_1,\ldots,x_k\}$ and $\{x_{k+1},\ldots,x_n\}$. The null hypothesis of identical distribution is then:

\[H_0: X_i \sim \mathrm{Exp} (\lambda_0) \quad \forall i.\]

The alternative hypothesis that there is a single change point at location $k$ is:
\[H_1: X_i \sim  \left\{ \begin{array}{rl}
\mathrm{Exp}(\lambda_0) &\mbox{ if $i \leq k$} \\
\mathrm{Exp}(\lambda_1) &\mbox{ if $i > k$.} \\
       \end{array} \right.
\]

where $\lambda_0$ and $\lambda_1$ are unknown. Letting $L_0$ and $L_1$ denote the likelihoods under the null and alternative hypothesis respectively, and writing $M_{k,n} = -2 \log(L_0/L_1)$, the generalized likelihood ratio test statistic is then:

\begin{align*}
M_{k,n} =  -2\left(n \log \frac{n}{S_{0,n}} - k \log \frac{k}{S_{0,k}} - (n-k) \log \frac{n-k}{S_{k.n}}\right)
\end{align*}

where $S_{i,j}$ is defined as above. The test for a change point could then be based on $M_{n} = \max_k M_{n,k}$. However doing this naively leads to the same problem as in the Gaussian case, namely that even though $M_{k,n}$ has an asymptotic $\chi^2_1$ distribution, this will not be achieved when $k$ is close to $0$ or $n$ . Specifically, it can be shown (see Appendix) that:

\begin{figure}
  \centering
  \includegraphics[width=0.4\textwidth]{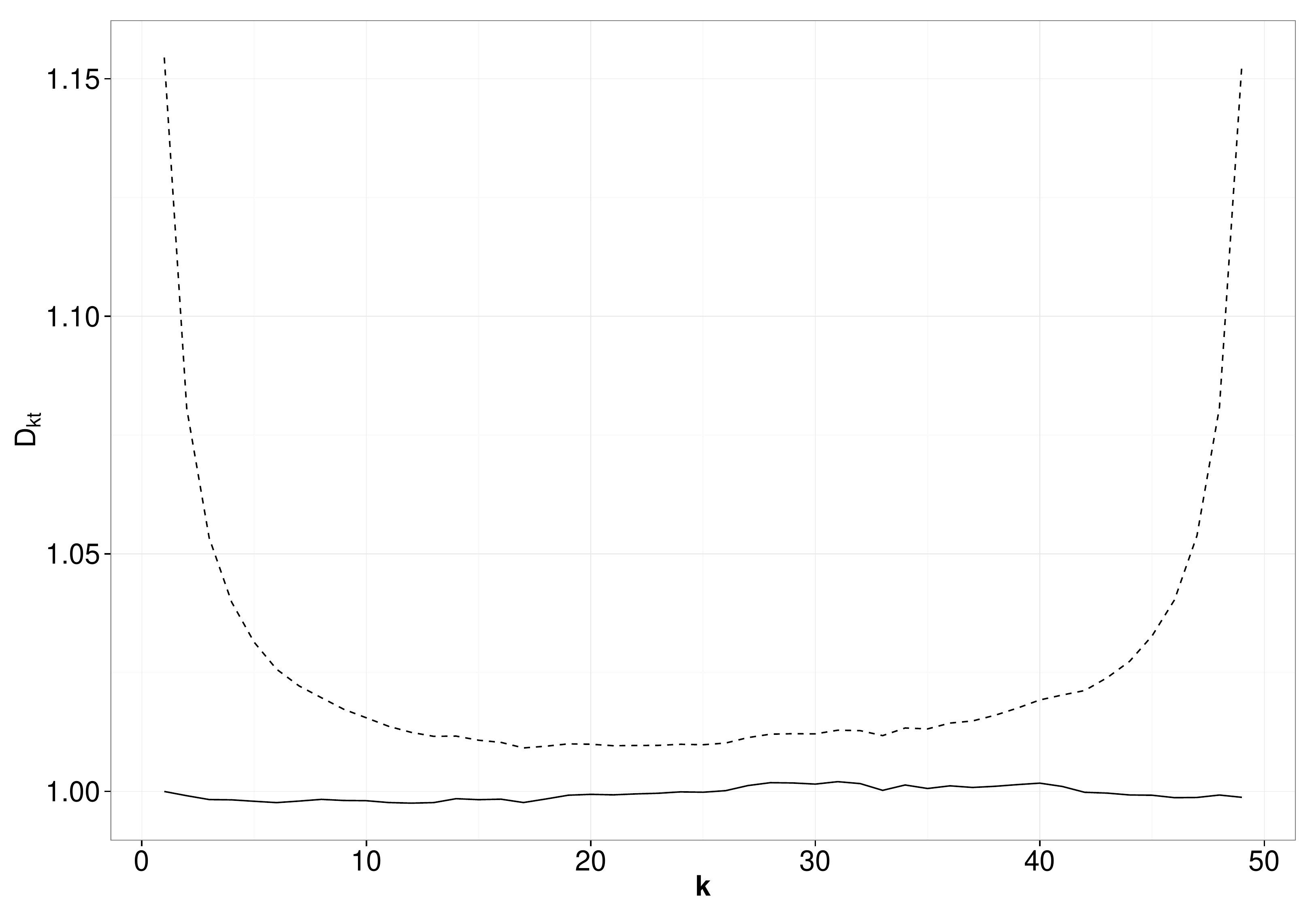}
 \caption{Values of $M_{k,50}$ (dotted line) and $M^c_{k,50}$ (solid line)}
\label{fig:expfinite1}
\end{figure}

\begin{align*}
E[M_{k,t}] =   -2[&k\psi(k) + (n-k)\psi(n-k) -n\psi(n) + \\& n \log (n) - k \log k - (n-k) \log (n-k)].
\end{align*}

Hence as $n \rightarrow \infty$, $E[M_{n,k}] \rightarrow -2k[\psi(k) - \log(k)]$ for any fixed $k$, while the expected value of a  $\chi^2_1$ random variable is $1$ .  Figure \ref{fig:expfinite1} illustrates this by plotting the expected values of $M_{k,50}$ for each value of $k \in \{1,2,\ldots,49\}$, which shows the same pattern as before, with a large spike as both boundaries are approached. 

 In order to correct this, we again define a corrected version of the test statistic by again dividing by its  finite sample expectation:

$$M^c_{k,t} = \frac{M_{k,t}}{E[M_{k,t}]}, \quad M^c_t = \max_k M^c_{k,t}.$$

The values of the $M^c_{k,50}$ statistics are also plotted on Figure  \ref{fig:expfinite1} and it can be seen that the spike at the boundary no longer exists.

Sequential change detection can then be carried out in the same manner to the Gaussian case, with $M_t^c$ being recomputed after each observation, and the $h_t$ sequence being chosen to bound the $ARL_0$. Table \ref{tab:thresholds} in the Appendix gives the values of $h_t$ which correspond to various values of the $ARL_0$ for both $M_t^c$ and $M_t^c$. In the next section we will show how using the finite-sample correction again leads to superior change detection performance.




\section{Performance Analysis}
\label{sec:experiments}

We  now investigate the performance of the proposed change detection statistics. First, in Section \ref{sec:experiments1} we compare the performance of the statistics which have had their finite sample moments corrected, to the uncorrected versions which rely on asymptotic distributions. By the arguments in the previous section, it should be expected that the corrected versions outperform the uncorrected versions in all reasonable situations. Next, in Section \ref{sec:experiments2} we discuss how the frequentist paradigm used in this paper compares to recent Bayesian approaches for sequential change detection, such as that of \cite{Fearnhead2007}. Finally in Section \ref{sec:experiments3} we explore how the statistics perform when applied to several real data sets.

\subsection{The Effect of the Finite Sample Correction}

\label{sec:experiments1}

We begin by comparing the finite-sample corrected Gaussian and Exponential change detection statistics to both the Gaussian statistic from \cite{Hawkins2005a} which uses an asymptotic correction, and to the uncorrected Exponential statistic from Section \ref{sec:exponential}. As discussed in Section \ref{sec:detection}, the standard approach for comparing the performance of frequentist change detection methods is to ensure that each method generates false positives at the same rate (denoted by the $ARL_0$)  under the assumption that there has been no change points,  and to then compare the average number of post-change observations required before changes of various magnitudes are detected \citep{Basseville1993}. This is analogous to comparing classical hypothesis tests where the power of each test is investigated subject to a bound on the Type I error probability. In the below experiments we have chosen a value of $ARL_0 = 500$ for each chart, although the patterns we observe are the same for all values.

Since we are concerned with cases where the pre-change parameters are unknown, the number of observations which are available from the pre-change distribution will affect the delay until the change is detected, as a larger number of observations means that the unknown parameters will be more accurately estimated, resulting in quicker detection. We therefore investigate changes which occur at locations $\tau = 25$ and $\tau = 100$ which correspond to an early and a late change respectively.

\subsubsection{Gaussian Sequences}

For each change point location $\tau$, the pre-change distribution is set to $X_i \sim N(0,1)$ when $i \leq \tau$. We then investigate both mean changes where the post change distribution shifts to $N(\mu_1,1)$ for $\mu_1 \in [0,2]$, and variance changes where the post change distribution shifts to $N(0,\sigma_1^2)$ for $\sigma_1 \in [0.33, 3]$. For each change location and change magnitude, $100000$ sequences were generated according to these distributions. For each sequence, the observations were processed sequentially until a change was detected. This allows the average detection delay $E[T - \tau | T > \tau]$ to be estimated, where T denotes the observation after which a change is first signalled. Ideally this delay should be as low as possible; i.e changes should be detected as soon after they occur as possible.
\begin{table}[t]
\caption{Average number of observations before a change from $N(0,1)$ to $N(\mu_1,1)$ occurring at time $\tau$ is detected. $H_t$ represents the chart from \cite{Hawkins2005a}, while $D^c_t$ denotes our the finite-sample corrected statistic.}
\begin{center}
\subfloat[$\tau=25$]{
\begin{tabular}{rrr}
  \hline
$\mu_1$ & $H_t$ & $D^c_t$ \\ 
  \hline
  0.00 & 502.4 & 497.7 \\
 0.25 &473.9 &436.0\\
        0.50& 388.0 &335.4\\
     0.75 & 211.6 & 182.4\\
        1.00  &75.7  &63.8\\
1.25  &26.0 & 23.3\\
1.50 & 13.8 &  12.8\\
    1.75   &9.7   &9.1\\
 2.00 &  7.6  &7.2\\
   \hline
\label{tab:mean50}
\end{tabular}
}
\subfloat[$\tau=100$]{
\begin{tabular}{rrr}
  \hline
$\delta$ & $H_t$ & $D^c_t$ \\ 
  \hline
  0.00 & 499.1& 504.2 \\
 0.25 &388.0 &341.9\\ 
     0.50 &118.8 &106.6\\ 
           0.75  &34.8  &32.5\\ 
   1.00  &18.5  &17.5\\ 
           1.25  &12.1  &11.6\\
            1.50  & 8.8   &8.6\\ 
             1.75   &6.9  & 6.7\\ 
   2.00   &5.6&   5.5\\ 
   \hline
\label{tab:mean300}
\end{tabular}
}
\end{center}
\end{table}

Tables \ref{tab:mean50} and \ref{tab:mean300} show the average detection delay for mean changes which occur after $25$ and $100$ observations respectively, while Tables \ref{tab:var50} and \ref{tab:var300} give the same information for variance changes. Unsurprisingly, these results show that changes which occur after $100$ observations are detected faster, with larger changes being easier to detect than smaller ones. Similar to the findings which have been reported by others who have studied changes in variance \citep{Hawkins2005a,Rosstechnometrics}, decreases in the variance take longer to detect than increases. 

Looking at the comparative performance of the finite-sample corrected $D^c_t$ statistic compared to the HZ approach, it can be seen that the former detects changes faster across every combination of change magnitude and change time. The size of the performance gain depends on the magnitude of the change;  when $\tau=25$, using the $D^c_t$ statistic results in changes being detected around $10\%$ faster, with greater improvements for smaller change magnitudes. For example, when the mean increases by $0.25$, our chart on average detects the change roughly $40$ observations faster than the HZ proposal. Such performance improvements may be substantial in typical situations where the change must be detected as fast as possible. 

Finally, we note that using the $D^c_t$  statistic still results in faster change detection when  $\tau=100$. In this case the performance improvement is slightly less substantial, particularly when the change magnitude is large. However smaller changes are still detected more than $10$ observations faster when using  $D^c_t$. This again may be a substantial improvement in a process control setting where quick detection is paramount. This highlights the previous point that the use of a larger sample does not fix the issues which negatively affect the performance of the HZ chart, since it is still constrained by having the maximization occur over split points which produce small samples. As the extra computation required to compute the $D^c_t$ statistic is minimal, we would hence recommend using $D^c_t$ in any practical situation.

\begin{table}[t]
\caption{Average number of observations before a change from $N(0,1)$ to $N(0,\sigma_1^2)$ occurring at time $\tau$ is detected. increases in variance are given first, followed by decreases.}
\begin{center}
\subfloat[$\tau=25$]{
\begin{tabular}{rrr}
  \hline
$\sigma_1$ & $H_t$ & $H^c_t$ \\ 
  \hline
  0.00 & 495.2 & 497.4 \\
1.50& 414.0 &366.4\\ 
2.00 &153.3  &124.5\\ 
 2.50  &29.0  &24.0\\ 
      3.00  &10.8   &9.9\\ 
  \hline
   0.67&294.1 &256.2\\ 
   0.50 &66.0  &57.2\\ 
     0.40 &22.5 & 20.6\\ 
       0.33 &14.5&  13.6\\ 
   \hline
\label{tab:var50}
\end{tabular}
}
\subfloat[$\tau=100$]{
\begin{tabular}{rrr}
  \hline
$\sigma$ & $H_t$ & $H^c_t$ \\ 
  \hline
  0.00 & 501.2 & 497.7 \\
       1.50 &85.3 &76.3\\ 
       2.00 &15.7 &15.0\\ 
    2.50  &8.4  &8.2\\ 
       3.00  &5.8  &5.7\\ 
  \hline
   0.67 &78.5 &72.0\\ 
        0.50 &23.8 &22.5\\ 
   0.40 &15.4 &14.7\\ 
     0.33 &12.0& 11.5\\ 
   \hline
\label{tab:var300}
\end{tabular}
}
\end{center}
\end{table}

\subsubsection{Exponential Sequences}

We perform a similar set of experiments for sequences which have an Exponential distribution where $X_i \sim \mathrm{Exp}(1)$ if $i \leq \tau$ and $X_i \sim \mathrm{Exp}(\delta)$ if $i > \tau$, where $\delta \in [0,3]$. As mentioned previously, such change detection tasks may arise in failure-time monitoring problems, or when testing for shifts in the rate of a Poisson process. 

Tables \ref{tab:exp50} and \ref{tab:exp100} show the average number of observations before a change is detected using both the finite-sample corrected and uncorrected statistics, for $\tau = 25$ and $\tau=100$. Similar to the Gaussian case, the finite sample corrected statistic has superior change detection performance across all values of $\delta$ and $\tau$, illustrating the importance of using a finite-sample correction. Again, this may prove to be very important in situations where fast change detection is critical.
\begin{table}[t]
\caption{Average number of observations before a change from $\mathrm{Exp}(1)$ to $\mathrm{Exp}(\delta)$ occurring at time $\tau$ is detected. $M_t$ represents the uncorrected test statistic, while $M^c_t$ denotes the finite-sample corrected statistic}

\begin{center}
\subfloat[$\tau=25$]{
\begin{tabular}{rrr}
  \hline
$\delta$ & $M_t$ & $M^c_t$\\ 
  \hline

            1.50 &332.1 &330.3\\
              2.00  &134.5  &127.2\\
          2.50  &47.4  &44.3\\
              3.00  &22.4  &21.2\\
              \hline
            0.67 &428.9 &418.4\\
             0.50  &224.5  &208.8\\
        0.40  &77.4 & 70.2\\
         0.33  &26.1&  24.4\\
   \hline
\label{tab:exp50}
\end{tabular}
}
\subfloat[$\tau=100$]{
\begin{tabular}{rrr}
  \hline
$\delta$ & $M_t$ & $M^c_t$ \\ 
  \hline

          1.50 &130.0 &125.5\\
          2.00  &30.1  &29.5\\
         2.50  &17.6  &17.0\\
            3.00  &12.9  &12.6\\
            \hline
              0.67 &167.3 &160.1\\
           0.50 & 27.4  &26.3\\
             0.40 & 13.4 & 13.2\\
              0.33  & 9.1&   8.9\\
   \hline
\label{tab:exp100}
\end{tabular}
}
\end{center}
\end{table}

\subsection{Comparison to Bayesian Change Detection}
\label{sec:experiments2}

This paper has focused on the frequentist paradigm, where change detection is carried out subject to a hard bound on the rate at which false positive detections occur, represented by the $ARL_0$. Due to the  structure of the likelihood ratio test statistics we considered for the Gaussian and Exponential distribution, such a bound can be guaranteed even when the pre- and post-change parameter values are unknown.

\begin{figure*}
  \centering     
  \subfloat{\label{fig:priors}\includegraphics[width=0.35\textwidth]{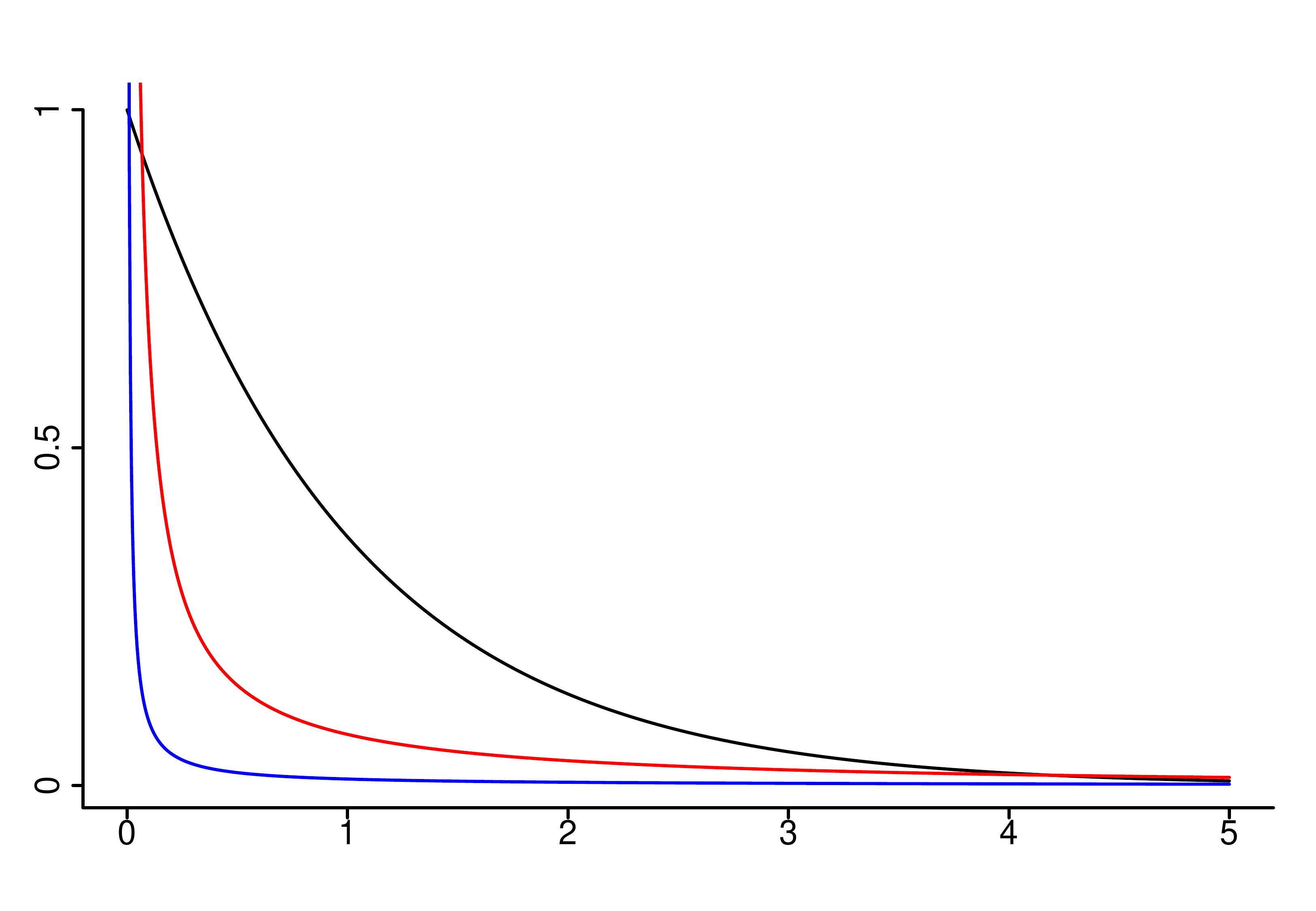}}  
  \subfloat{\label{fig:priors2}\includegraphics[width=0.35\textwidth]{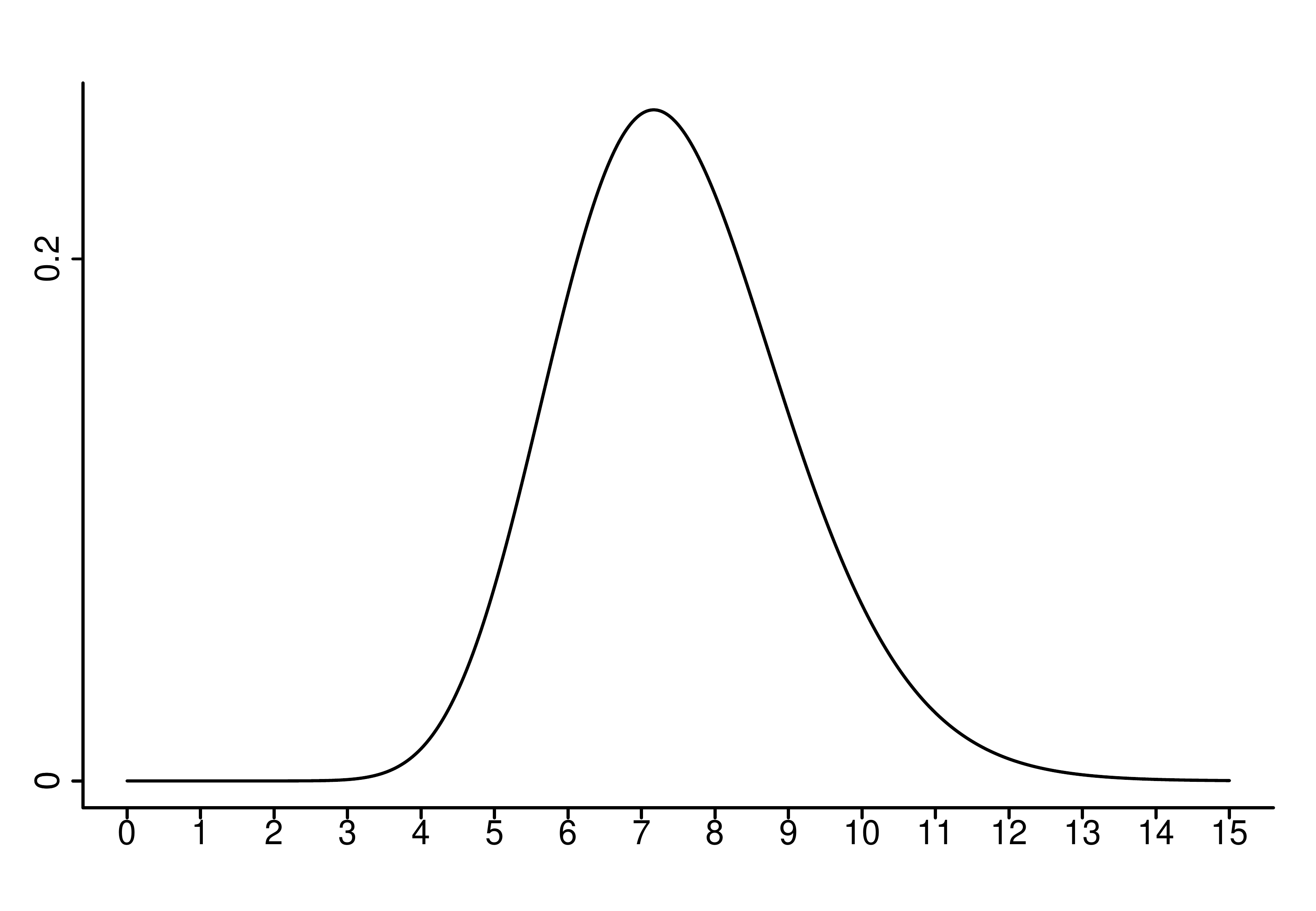}}         
\caption{The left plot shows various weakly-informative priors for the Exponential parameter $\lambda$, namely Gamma(1,1) (black line), Gamma(0.1,0.1) (red line) and Gamma(0.01,0.01) (blue line). The right plot shows the informative Gamma(22.5,7) prior which is peaked at $7.5$.}
  \label{fig:priors3}
\end{figure*}

There is also a substantial literature which approaches change detection from a Bayesian standpoint \citep{Fearnhead2007, Green1995, Chib1998}. This allows prior information about both the location of the change point, and the values of the monitored parameters within each segment, to be incorporated into the model. Although most existing Bayesian literature focuses on the non-sequential Phase I setting, recently there has been important work extending such methods to sequential detection through the use of particle filters \citep{Fearnhead2007}. We now give some general remarks on the situations for which the frequentist and Bayesian approaches are appropriate.

The method described in  \cite{Fearnhead2007} starts by putting a prior $g(x)$ on the number of observations between each pair of successive change points, with the Geometric and Negative Binomial being standard choices. Next, within each segment  a number of models $M_1,\ldots,M_k$ are allowed, with all parameters chosen in a manner such that the marginal likelihood $L(r,s,i)$ for model $i$ in the segment $x_{r+1},\ldots,x_s$ can be obtained analytically under the assumption that there are no change points within the segment - typically this is satisfied as long as conjugate priors are used. Next, a sequence of latent state variables $C_1,\ldots,C_n$ are introduced, where $C_t$ is associated with observation $x_t$ and denotes the location of the most recent change point (i.e. $C_t = k$ if the location of the last change point before $x_t$ occurred at observation $x_k$, with $C_t=0$ if no change points have occurred so far).  Under this parameterisation, the authors present a set of recursive equations which allow the posterior distributions of each $C_t$ to be computed sequentially. This allows for exact sampling from the posterior distribution of the change points. Although the computational time required to compute each posterior distribution $C_t$ increases linearly with the number of observations, making it unsuitable for sequential data sets with more than a few hundred observations, the authors present an approximation based on particle filters which achieves constant computational complexity subject to a specified approximation error. This methodology can be easily adapted to the change point problems we have considered in previous sections,.

We will use the following two simple change detection problem to illustrate the key differences between the two approaches, both based on the Exponential($\lambda$) distribution. In the first example, $\{X\}_t$ denotes a sequence of random variables where $X_t \sim \mathrm{Exp}(1)$ if $t < 50$ and $X_t \sim \mathrm{Exp}(3)$ otherwise. In the second, $\{Y\}_t$ denotes a sequence where $Y_t \sim \mathrm{Exp}(5)$ if $t < 50$ and $Y_t \sim \mathrm{Exp}(10)$ otherwise. Note that we have chosen to have only a small number of observations prior to the change point so that the posterior distributions from \cite{Fearnhead2007}  can be computed exactly without need for a particle approximation, and we restrict attention to sequences containing only a single change point in order to keep the analysis simple.

Suppose first that very little prior information is available regarding either the change point location, or the values of the Exponential distribution parameter $\lambda$ within each segment, and so relatively non-informative priors must be chosen. For the segment length, a Negative Binomial prior with mean $200$ and standard deviation $200$ is chosen. Selecting a non informative prior for the  $\lambda$ parameters is more difficult. Because change detection is essentially a model selection problem, using a prior which is overly non-informative makes it more difficult to detect change points, a consequence of Lindley's paradox \citep{Bernardo2000}. We will use a conjugate Gamma$(\epsilon,\epsilon)$ prior where the prior becomes flatter as $\epsilon \rightarrow 0$, with Gamma$(1/2,0)$ being the (improper) Jeffrey's prior. Figure \ref{fig:priors} shows a plot of these priors for various choices of $\epsilon$.

In order to use the Bayesian scheme for sequential change detection, we flag that a change has occurred at time $\hat{\tau}$ where $\hat{\tau} = \min_t P(C_t  =  0| x_1,\ldots,x_t) < c$, i.e. the first time that the probability of there being no previous change point drops below some fixed value $c$. In practice $c$ will be chosen in order to minimise a loss function which trades-off the cost of false alarms against quick detections but, to allow direct comparison to the frequentist approach, we consider a range of values of $c$. For each choice of $c$, we simulated 10000 realisations of both $\{X\}_t$ and $\{Y\}_t$, and computed both the proportion of times a false positive was generated (defined as a change being signalled before observation $50$), and the average number of observations taken for a change to be detected, conditional on $\hat{\tau} > 50$. We also performed the same set of experiments using the frequentist test statistic $M_t^c$, where we chose the $ARL_0$ to be $200$ in order to match the expected value of the Bayesian prior on the segment length.

Table \ref{tab:exp1} shows the average results for the $\{X\}_t$ sequences where the distribution changes from Exp(1) to Exp(3) after $50$ observations. There are several features of this which deserve comment. First, when the non-informative (and improper) Jeffrey's prior is used, the Bayesian scheme fails to detect any change points. This is a well known issue related to Lindley's paradox in model selection and illustrates the problems which come with using non-informative priors in change point problems; as the prior becomes increasingly flat, the range of values which receive non-neglible prior weight increases, resulting in an increasingly diffuse posterior which makes it hard to find change points. Similarly when $\epsilon \rightarrow 0$ in the Gamma$(\epsilon,\epsilon)$ prior corresponding to a flatter prior, the Bayesian scheme requires more and more post-change observations before the change point is detected, and the frequentist method consistently detects changes faster.

Note however that for the Gamma(1,1) prior there are  values of $c$ that result in the Bayesian scheme having both a superior false positive rate and detection speed compared to the frequentist method. This is because such a prior is quite informative and assigns a relatively high probability mass to values of $\lambda$ around $1$, as can be seen from Figure \ref{fig:priors}. To illustrate this, we repeated the analysis for the $\{Y\}_t$ sequences which change from  Exp$(5)$ to Exp$(10)$, and the results are shown in Table \ref{tab:exp2}, Here it can be seen that such a Gamma$(1,1)$ prior results in extremely slow detection of changes  due to the low weight it puts on parameter values around $\lambda=5$. Again, using a relatively non-informative Gamma (0.01,0.01) prior results in slow detections compared to the frequentist approach.

These examples show that it is quite difficult to design a Bayesian change detection scheme when there is no accurate prior information due to the difficulties encountered with non-informative priors. In a typical Bayesian modeling scenario, the solution to the above issues would be to instead use a hierarchal prior specification with a Gamma$(\alpha,\beta)$ prior being assigned to $\lambda$ in each segment, with a further prior assigned to $\alpha$ and $\beta$ to allow them to be learned from the data. However although such an approach is possible when working with non-sequential Phase I change detection problems, it is not possible in the sequential context presented in \cite{Fearnhead2007}, which requires the independence of observations in different segments and hence fixed prior paramours.

Of course, in many situations there will be accurate prior information available, and an informative rather than non-informative prior can be used. To illustrate this, we also analyzed the $\{Y\}_t$ sequence using an informative Gamma$(22.5,3)$ prior which has a mean of $7.5$, midway between the pre- and post-change values of $\lambda$, and a variance of 2.5, as shown in Figure \ref{fig:priors2}. The results when using this prior are also given in Table \ref{tab:exp2} and it can be seen that for some choices of the threshold $c$, the performance  is   superior to the frequentist method, both in terms of false positives and detection speed. This illustrates the strength of Bayesian change detection when it is possible to choose an informative prior which assigns relatively high weight to the true parameter values.

To further highlight the difference between the frequentist and Bayesian procedure, Table \ref{tab:finalexperiments} shows the false positive rate and detection delay when the pre- and post-change values of $\lambda$ are not fixed, but are sampled from the Gamma($22.5,3)$ prior. In this case, the change point occurs at $\tau=50$, and the observations are distributed as Exp($\lambda_0$) for $t \leq 50$ and as Exp($\lambda_0$) for $t > 50$, where both $\lambda_0$ and $\lambda_1$ are sampled from the Gamma($22.5,3$) distribution. Again, 10000 simulations were carried out, with different values of $\lambda_0$ and $\lambda_1$ sampled for each simulation. The prior for the Bayesian method was set to be equal to the true Gamma distribution used to sample these parameters. This is the optimal setting for the Bayesian approach, since in this case the informative prior matches the data generating distribution exactly. As can be seen from Table  \ref{tab:finalexperiments} , the Bayesian approach is unsurprisingly superior to the frequentist method in this context. 

In summary, Bayesian change detection schemes such as the one described in \cite{Fearnhead2007} give excellent performance in situations where there is enough prior knowledge about the distribution parameters in each segment to allow a relatively informative prior distribution to be specified. In these cases, performance will generally be superior to frequentist methods. However in situations where there is no such information, using a non-informative prior can result in very poor performance. In this case, the frequentist method may be preferred as a more robust alternative, with the added benefit of being able to put a hard bound on the false positive rate, even when the distributional parameters are unknown.

\begin{table}[t]
\caption{Proportion of false positives, and average detection delays for different choices of the Bayesian prior, and the frequentist $M^c_t$ statistic (top line) with an $ARL_0$ of 200}
\begin{center}
\subfloat[Exp(1) $\rightarrow$ Exp(3)]{
\begin{tabular}{rrr}
  \hline
   & Fps & Delay \\
  \hline
  $M^c_t$&0.14 & 60.5\\
  \hline
c &   \multicolumn{2}{c}{Gamma$(1,1)$}\\
   \hline
  0.2 & 0.01 & 68.7 \\
  0.4 & 0.04 & 64.9 \\
  0.6 & 0.11 & 61.7 \\
  0.8 & 0.34 & 57.7 \\
  \hline
 c&  \multicolumn{2}{c}{Gamma$(0.1,0.1)$}\\
   \hline
  0.2 & 0.01 & 72.2 \\
  0.4 & 0.02 & 68.3 \\
  0.6 & 0.04 & 65.1\\
  0.8 & 0.12 & 61.4\\
   \hline
c&   \multicolumn{2}{c}{Gamma$(0.01,0.01)$}\\
   \hline
  0.2 & 0.01& 80.9 \\
  0.4 & 0.01& 76.4\\
  0.6 & 0.02& 72.9\\
  0.8 & 0.02& 68.9 \\
\label{tab:exp1}
\end{tabular}
}
\subfloat[Exp(5) $\rightarrow$ Exp(10)]{
\begin{tabular}{rrr}
  \hline
   & Fps & Delay \\
  \hline
  $M^c_t$&0.14 & 76.1\\
  \hline
 c&   \multicolumn{2}{c}{Gamma$(1,1)$}\\
   \hline
  0.2 & 0.01 & 186.7\\
  0.4 & 0.01 & 169.3\\
  0.6 & 0.01& 153.5\\
  0.8 & 0.04 & 133.2\\
  \hline
c&   \multicolumn{2}{c}{Gamma$(0.01,0.01)$}\\
   \hline
  0.2 &0.01 & 170.0\\
  0.4 & 0.01& 151.2\\
  0.6 & 0.01& 134.4\\
  0.8 & 0.01& 114.6\\
  \hline
   c&  \multicolumn{2}{c}{Gamma$(22.5,3)$}\\
   \hline
  0.2 & 0.01& 99.3. \\
  0.4 & 0.01& 81.0\\
  0.6 & 0.02& 68.0\\
  0.8 & 0.43& 55.5\\
\label{tab:exp2}
\end{tabular}
}
\end{center}
\label{tab:exp}
\end{table}

\subsection{Real Data}
\label{sec:experiments3}

We conclude with two examples of change detection in real applications. We first consider a hard bake process which is taken from the statistical process control literature, and then look at an example using a potentially heavy-tailed financial return series.

\subsection{Hard Bake Process}
The hard-bake process is a commonly used step in the manufacturing of semi-conductors. It is typical to apply it to wafers after they have had light-sensitive photoresistive material applied, in order to increase their resist adherence and etch resistance. During this process, a key quality characteristic is the flow width of the resist. Data taken from such a process is given in \cite{Montgomery2005}, which consists of 125 flow width measurements representing the initialization phase where control chart parameters are learned, followed by 95 Phase II observations which must be monitored for changes. The key advantage of the unknown parameter formulation we have used in this paper is that there is no need to treat the initialization phase different from the actual monitoring, and so we treat the data as being a single sequence containing 220 observations. This observations are plotted in Figure \ref{fig:bake}

\begin{figure}[!t]
  \centering     
  \includegraphics[width=0.4\textwidth]{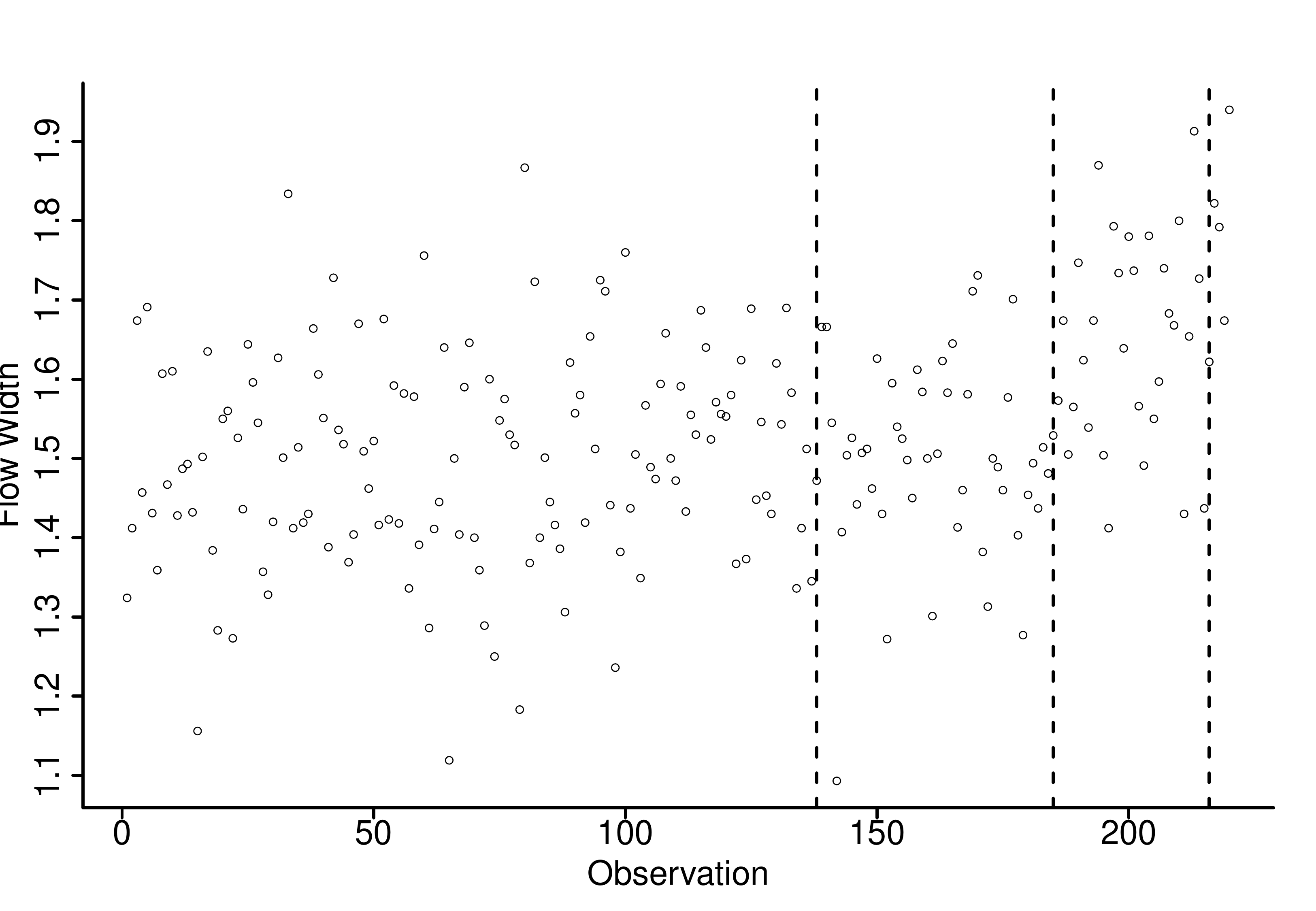}
\caption{Measurements taken from the hard bake process,  with the discovered change points superimposed as dotted lines}
  \label{fig:bake}
\end{figure}

We use the $D_t^c$ statistic in order to perform sequential change detection on this sequence. Although we previously only considered sequences containing a single change point,, the extension to multiple change points is simple. The  change detector processes the observations sequentially, until  the first time $D_t^c > h_t$, in which case a change is signaled. Suppose this occurs at observation $T_1$. Then, the best estimate of the location of the change point is $\hat{\tau}_1 \leq T_1$ where $\hat{\tau_1}$ is equal to the value of $k$ which maximized $D_{k,T_1}$. Sequential change detection then resumes at observation $X_{\hat{\tau}_1 + 1}$, which is the first observation following the estimated change point, with the previous observations being discarded.

Using an $ARL_0$ of $500$ as in previous examples,  change point signals were given at observations $140, 204$, and 221. The estimated change point locations were at observations $138, 185$, and $216$ respectively, and these are plotted on Figure \ref{fig:bake}. Repeating the same analysis using the $H_t$ statistic from \cite{Hawkins2005a} with the asymptotic correction resulted in the same three change point locations being estimated, but with corresponding detection times $140, 207$, and 221. If we can assume that the estimated change point locations correspond to true changes, then this implies that the second change point was detected $3$ observations faster when using the corrected $D_t^c$ statistic, which is consistent with our previous findings from Section \ref{sec:experiments1}. If these change points represent genuine faults in the underlying manufacturing process, then the ability to detect them faster may be important in practice, if it allows corrective action to be taken earlier.

\subsection{Financial Data}
Since the change point models we have considered are parametric, appropriate care must be taken when deploying them to ensure that the parametric assumptions are satisfied. Failure to do so may result in spurious false positives, and/or slow change detections. We illustrate this with an example using financial stock market returns, where we analyze the weekly returns of the Dow Jones Industrial Average stock market index. This index is widely traded, and is made up of $30$ large publicly owned American companies The data we have spans the period from the $1^{st}$ of January 1991 to the $31^{st}$ of October 2011. Let $P_t$ denote the opening price of the Dow Jones stock market index on week $t$ for each of the $1053$ weeks in the sample period The log returns $X_t = \log(P_t/P_{t-1})$ then represent the weekly price changes, and this series is plotted in Figure \ref{fig:dow1} where it can clearly be seen that the variance of this series is non-stationary and changes over time. Finding appropriate models for such return series is a widely studied problem within financial econometrics. Although the conditional variance of financial returns is often modelled using a pure GARCH process \citep{Engle2001} it is now accepted that many return series also contain structural breaks in the unconditional variance, and that these can be found using a change point approach . It is common to first look for change points by treating the $X_t$ series as if it the observations are independent and identically distributed between each pair of change points \citep{Aggarwal1999,Rossphysica,Inclan1994}.

\begin{figure*}[!t]
  \centering     
  \subfloat[Parametric]{\label{fig:dow1}\includegraphics[width=0.5\textwidth]{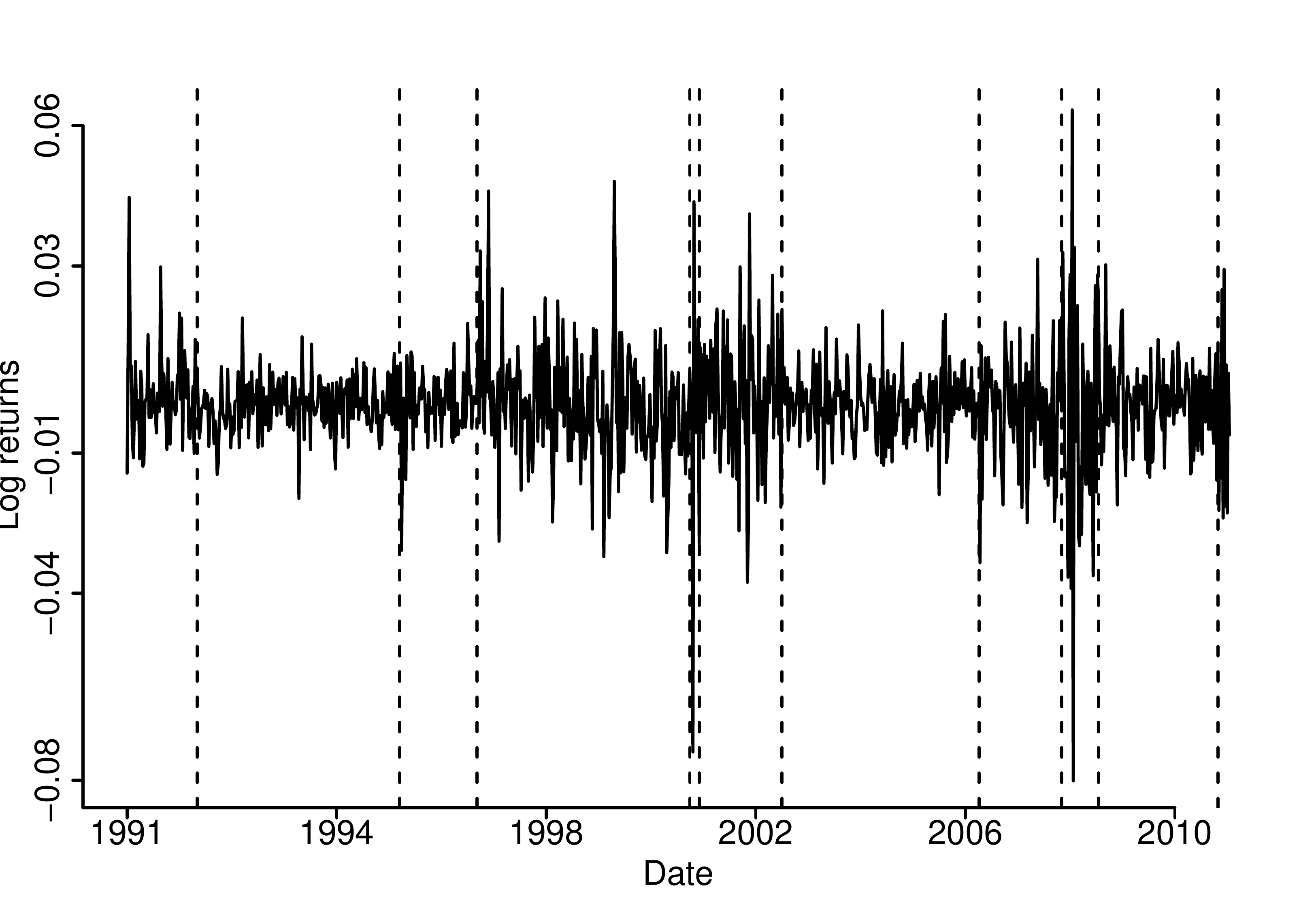}}
  \subfloat[Nonparametric]{\label{fig:dow2}\includegraphics[width=0.5\textwidth]{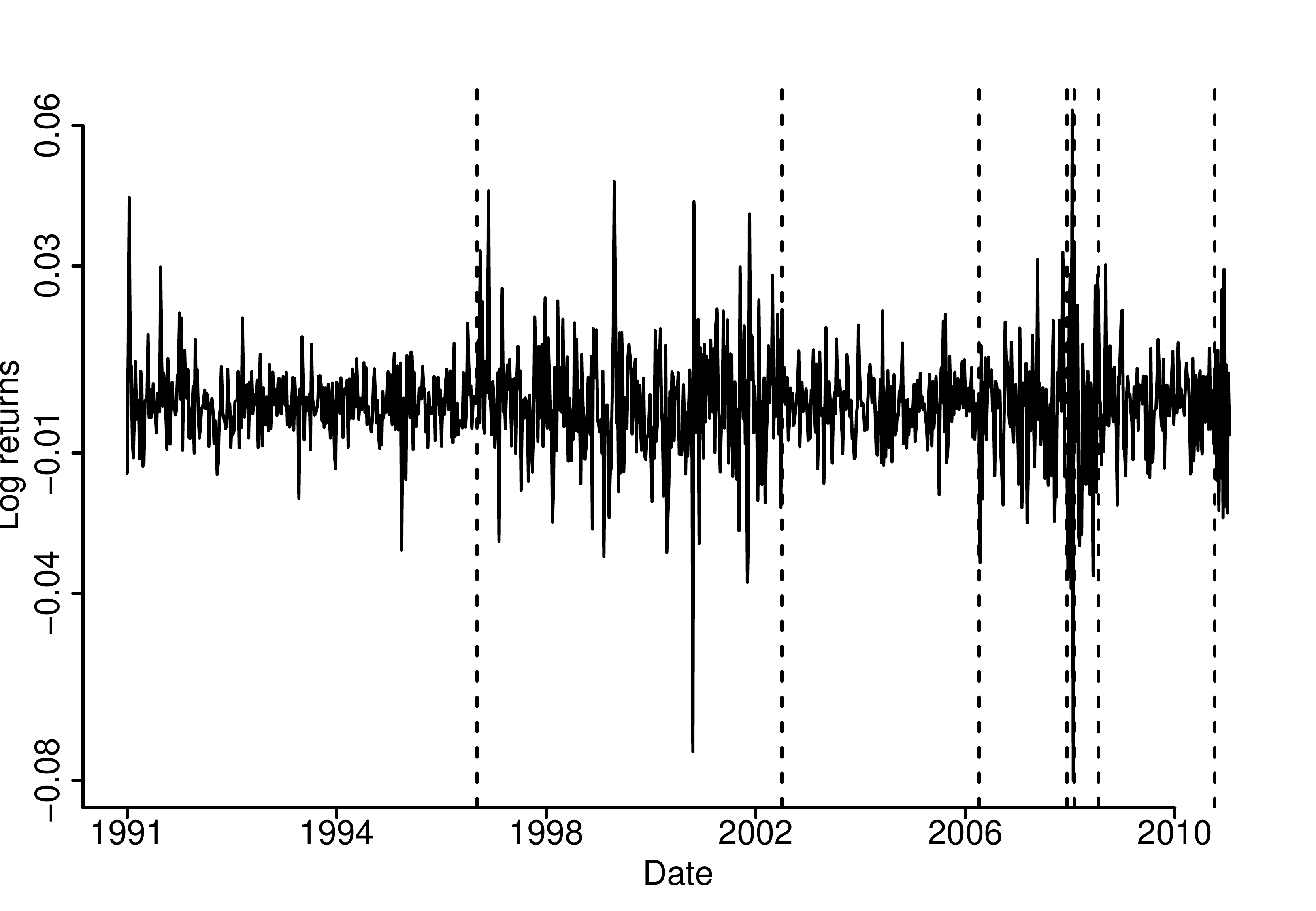}}           
\caption{Log-returns of the Dow Jones stock index from $2001$ to $2011$, with the change points found using both the parametric $D_t^c$ statistic and the nonparametric Lepage statistic superimposed as dotted lines}
  \label{fig:priceseries}
\end{figure*}

We illustrate this by using the $D_t^c$ statistic to sequentially locate the change points in this sequence of returns. This is done an identical manner to the previous example, where the sequence is processed one observation at a time and a sequential decision is made about whether a change has occurred after each data point. Since there are a large number of observations, we chose an $ARL_0$ of 5000 to avoid an excessive number of false positives being generated. In total $10$ change points were detected, which are shown in Figure \ref{fig:dow1}.  Running the same analysis using the asymptotically corrected $H_t$ statistic from \cite{Hawkins2005a} again resulted in the same $10$ change points being found, but as in the previous examples, $4$ of these were detected later than when using the finite sample corrected statistic $D_t^c$ with the increased delays varying from between 1 and 5 additional observations.

The likelihood ratio test underlying both the $D_t^c$ and $H_t$ statistics is based on the assumption of Gaussianity. However previous studies have found  evidence that financial returns are non-Gaussian and exhibit heavy tail behavior, even when the conditional variance is taken to be time-varying \citep{Rossphysica}. To investigate this, we also tried detecting change points using the nonparametric Lepage-based statistic described in \cite{Rosstechnometrics} which can detect change points in the mean and/or variance without making distributional assumptions. Running this method with the $ARL_0$ also set to $5000$ resulted in only 7 change points being detected, which are shown in Figure \ref{fig:dow2}. Comparing the change points found by the two methods, it can be seen that the Gaussian assumption made when using the $D_t^c$ statistic results in extreme outlying observations being interpreted as change points, such as the change point found in late $1995$. As the nonparametric test is more robust against heavy tailed data, it detects fewer change points.  This highlights the pitfalls which can arise when deploying such parametric models to series which are not known to be Gaussian. In this case, it may be more appropriate to first make a transformation to the data in order to make it closer to Gaussian \citep{Qiu2011}, or simply use a nonparametric technique.

Of course, in situations where the parametric assumptions used in the likelihood ratio test are correct, the parametric change point models will generally be able to detect changes faster than their nonparametric counterparts, conditional on the same bound on the $ARL_0$. 

\section{Concluding Remarks}
The task of sequential change detection is more difficult when the parameters of the pre- and post-change distributions are unknown and must be estimated from the data. In this situation, the sequential generalised likelihood ratio testing gives a computationally efficient procedure which is able to achieve a desired bound on the rate at which false positives occur. In \cite{Hawkins2005a}, an approach is developed along these lines, however we have shown that it suffers from a small but persistent bias due to the way in which test statistics are calculated. Their procedure relies on an asymptotic argument which fails in the sequential context where samples containing only a small number of observations must be used, regardless of how many observations are available. By introducing a finite sample correction for this statistic, we have given a more powerful version of their method which is able to detect changes in both mean and variance faster, across all combinations of change magnitude and location. The extra computational burden introduced by our approach is minimal, and consists only of modifying the boundary statistics by subtraction and division by a constant, and should therefore be preferred when performing Gaussian change detection in practice. We also showed that such issues can arise when performing sequential parametric change detection using other distributional forms, which we illustrated by constructing a novel change detection procedure for the Exponential distribution. As in the Gaussian case, the use of a finite-sample correction provides identical or better performance in all situations.

\section{Supplementary Material}
R code implementing the change detection algorithm using the $D^c_t$ and $M^c_t$ statistics is contained in the \textbf{cpm} R package available from CRAN: \url{http://cran.r-project.org/web/packages/cpm/index.html}. 

Documentation for this package can be obtained from the author's website: \url{http://www.gordonjross.co.uk/software.html}

\appendix

\section{Test Statistic Moments}
\newtheorem{thm}{Theorem}
 \begin{thm} 
Under the null hypothesis of no change:
\begin{align*}
E[D&_{k,t}] =  t (\log(2/t) + \psi( (t-1)/2)) \\& - k(\log(2/k) + \psi((k-1)/2)) \\& - (t-k)(\log(2/(t-k)) + \psi((t-k-1)/2)).
\end{align*}
where $\psi$ denotes the digamma function.

\end{thm}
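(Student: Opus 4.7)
The plan is to exploit location/scale invariance together with the chi-square distribution of sample variances, and then apply the standard identity for the expectation of the logarithm of a chi-square random variable.

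First I would expand the two logarithms in $D_{k,t}$ to separate the three sample variances:
\begin{equation*}
D_{k,t} = t \log S_{0,t} - k \log S_{0,k} - (t-k)\log S_{k,t}.
\end{equation*}
By linearity of expectation, it suffices to compute $E[\log S_{r,s}]$ for each of the three index ranges $(0,t)$, $(0,k)$, $(k,t)$. Since $D_{k,t}$ is invariant under affine transformations of the data, under the null hypothesis I may assume WLOG that $X_i \stackrel{\mathrm{iid}}{\sim} N(0,1)$.

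Next, for any $r<s$, set $n = s-r$. The standard result on Gaussian sample variances gives
\begin{equation*}
n\, S_{r,s} = \sum_{i=r+1}^{s} (X_i - \bar{X}_{r,s})^2 \sim \chi^2_{n-1},
\end{equation*}
so $\log S_{r,s} = \log(\chi^2_{n-1}) - \log n$ in distribution. The key analytic ingredient I would then invoke is the classical moment formula
\begin{equation*}
E\!\left[\log \chi^2_\nu\right] = \log 2 + \psi(\nu/2),
\end{equation*}
which follows by differentiating the moment generating function $E[(\chi^2_\nu)^s] = 2^s\,\Gamma(\nu/2+s)/\Gamma(\nu/2)$ with respect to $s$ at $s=0$. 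Applied here this yields
\begin{equation*}
E[\log S_{r,s}] = \log(2/n) + \psi((n-1)/2).
\end{equation*}

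Finally I would substitute $n = t$, $n = k$, and $n = t-k$ respectively into the three terms of the expansion of $E[D_{k,t}]$. Collecting them produces precisely
\begin{align*}
E[D_{k,t}] = {}& t\bigl(\log(2/t)+\psi((t-1)/2)\bigr) \\
& - k\bigl(\log(2/k)+\psi((k-1)/2)\bigr) \\
& - (t-k)\bigl(\log(2/(t-k))+\psi((t-k-1)/2)\bigr),
\end{align*}
as claimed. There is no real obstacle; the only step requiring thought is recalling the digamma identity for $E[\log \chi^2_\nu]$, and noting that the dependence between $S_{0,t}$, $S_{0,k}$, $S_{k,t}$ is irrelevant since expectation is linear and only the marginal distributions of each $S_{r,s}$ matter.
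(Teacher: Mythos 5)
Your proposal is correct and follows essentially the same route as the paper's proof: expand $D_{k,t}$ into the three log-variance terms, use the fact that each $S_{r,s}$ is a scaled $\chi^2_{n-1}$ variable, and evaluate $E[\log\chi^2_\nu]=\log 2+\psi(\nu/2)$ via the moment generating function of $\log\chi^2_\nu$, with the scale constants cancelling. If anything, your write-up is slightly cleaner in making the scaling factor $1/n$ explicit (via the WLOG reduction to $\sigma^2=1$) and in noting that dependence among the three statistics is irrelevant by linearity of expectation.
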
 

\begin{proof}
We roughly follow the argument of  \cite{Zamba2009a}.  From Equation \ref{eqn:lrt}, we have that 
\begin{align}
E[D&_{k,t}] =   k E[\log S_{0,t}] - k E[\log S_{0,kj}] +  (t-k) E[S_{0,t}]\notag \\& - (t-k) E[S_{k,n}]  .
\label{eqn:proof}
\end{align}
Consider the first term $E[\log S_{0,t}] $. By basic properties of the Gaussian distribution $S_{0,t}$ has the same distribution as a $\chi^2_{t-1}$ random variable multiplied by a factor of $K = t/(t-1)\sigma^2$ where $\sigma^2$ is the true variance. Let $W_{t-1} \sim \chi^2_{t-1}$, then 
\[E[\log(S_{0,t})] \sim \log(K) + E[\log(W_{t-1})].\]
Now, $E[\log(W_{t-1})]$ can be calculated using the moment generating function, $M_{\log W_{t-1}}$. Note that $M_{\log W_{t-1}}(t) = E[W_{t-1}]$. Computing this expectation and then differentiating the moment generating function yields $E[W_{t-1}] =  \log 2 + \psi((t-1)/2)$. Repeating this argument for the other terms in Equation \ref{eqn:proof} gives the desired result, with the $K$ factors cancelling out.
\end{proof}



 \begin{thm} 
Under the null hypothesis of no change:
\begin{align*}
E[M&_{k,t}] = -2[k\psi(k)  + (n-k)\psi(n-k) -n\psi(n) + \\& n \log (n) - k \log k - (n-k) \log (n-k)]
\end{align*}
where $\psi$ denotes the digamma function.
\end{thm}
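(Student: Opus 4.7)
The plan is to reduce the statement to a computation of expectations of log-partial-sums under the null, using the fact that under $H_0$ all observations are i.i.d.\ $\mathrm{Exp}(\lambda)$, so that partial sums are Gamma distributed and their log-expectations are given by the digamma function.

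First, I would rewrite $M_{k,n}$ by expanding each log-ratio into a sum of deterministic and stochastic parts. Using $S_{0,n} = \sum_{i=1}^n x_i$, $S_{0,k} = \sum_{i=1}^k x_i$, and $S_{k,n} = \sum_{i=k+1}^n x_i$ (the natural sufficient statistics for the Exponential MLE), the definition of $M_{k,n}$ rearranges to
\begin{align*}
M_{k,n} = -2\bigl[\,&n\log n - k\log k - (n-k)\log(n-k)\\
& - n\log S_{0,n} + k\log S_{0,k} + (n-k)\log S_{k,n}\bigr].
\end{align*}
Taking expectation and using linearity then reduces the problem to computing $E[\log S_{0,n}]$, $E[\log S_{0,k}]$, and $E[\log S_{k,n}]$ under $H_0$.

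Second, I would invoke the fact that a sum of $m$ independent $\mathrm{Exp}(\lambda)$ random variables is $\mathrm{Gamma}(m,\lambda)$ (shape $m$, rate $\lambda$). The standard identity $E[\log Y] = \psi(\alpha) - \log\beta$ for $Y \sim \mathrm{Gamma}(\alpha,\beta)$ (derivable by differentiating the moment generating function of $\log Y$, as in the proof of the preceding theorem) then gives
\begin{align*}
E[\log S_{0,n}] = \psi(n) - \log\lambda,\quad E[\log S_{0,k}] = \psi(k) - \log\lambda,\quad E[\log S_{k,n}] = \psi(n-k) - \log\lambda.
\end{align*}

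Finally, substituting these three identities into the expression for $E[M_{k,n}]$, the three $\log\lambda$ contributions collect with coefficients $-n + k + (n-k) = 0$ and cancel, leaving precisely
\begin{align*}
E[M_{k,n}] = -2\bigl[\,&k\psi(k) + (n-k)\psi(n-k) - n\psi(n)\\
& + n\log n - k\log k - (n-k)\log(n-k)\bigr],
\end{align*}
as claimed. There is no real obstacle here; the only thing that needs checking is the cancellation of $\log\lambda$, which is reassuring since the expression must be free of the unknown nuisance parameter for the test statistic to be pivotal under the null.
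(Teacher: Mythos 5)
Your proposal is correct and follows essentially the same route as the paper: expand the log-ratios so the statistic depends only on the three partial sums, note these are Gamma distributed under the null, apply $E[\log Y]=\psi(\alpha)-\log\beta$ (obtained via the moment generating function of $\log Y$), and observe that the $\log\lambda$ terms cancel. The only difference is cosmetic: you correctly read the $S_{r,s}$ in the Exponential statistic as plain partial sums (the Exponential sufficient statistics), which is exactly how the paper's own proof treats them.
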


\begin{proof}
First note that if $Y_1,\ldots,Y_m$ are i.i.d $\mathrm{Exp}(\lambda)$ random variables then $\sum_{i=1}^{m}Y_i \sim \mathrm{Gamma}(n,\lambda)$.
By separating out the log terms, it follows that $E[M_{k,t}]$ has the same distribution as:

\begin{align*}
-2[&-n \log (V+W) + k \log V + (n-k) \log W  + n \log (n) \\&- k \log k - (n-k) \log (n-k)]
\end{align*}
where $V \sim \mathrm{Gamma}(k,\lambda)$ and $W \sim \mathrm{Gamma}(n-k,\lambda)$. Using a similar argument to the above proof based on the moment generation function, it can easily be shown that $E[\log V] = \psi(k) - \log(\lambda)$. The result then follows, with the $\lambda$ terms canceling out.
\end{proof}

\section{Results Using an Informative Prior}
Table \ref{tab:finalexperiments} shows the false positives and average detection delay in the ideal case described in Section \ref{sec:experiments2} where the parameters in the Bayesian change detection model are assigned at Gamma($22.5,3$) prior, and the parameters used in the simulation are also sampled from this distribution.

\begin{table}[h]
\caption{Proportion of false positives, and average detection delays when parameters are simulated from the Gamma(22.5,3) prior}
\begin{center}
\begin{tabular}{rrr}
  \hline
   & Fps & Delay \\
  \hline
  $M^c_t$&0.34 & 153.03\\
  \hline
c &   \multicolumn{2}{c}{Gamma$(22.5,3)$}\\
   \hline
  0.2 & $<$0.01 & 207.67 \\
  0.4 & $<$0.01 & 114.89 \\
  0.6 & 0.11 & 49.47 \\
  0.8 & 0.63 & 6.20 \\
  \label{tab:exp1}
\end{tabular}
\end{center}
\label{tab:finalexperiments}
\end{table}

\section{ $h_t$ Thresholds}
Table \ref{tab:thresholds} gives values of the exponentially smoothed threshold sequences $\tilde{h}_t$ which correspond to several choices of the $ARL_0$. Note that these thresholds become roughly constant (subject to sampling variation) after a few hundred observations, and so for values of $t>800$, using the threshold value corresponding to $t=800$ is advised. For the case where the $ARL_0=100$, computing the thresholds for high values of $t$ is very computationally expensive, and so the thresholds were only computed up to $t=500$, with values above this being computed using spline interpolation.

\begin{table*}[t]
\caption{Values of the threshold sequences $h_t$ corresponding to various choices of the $ARL_0$ when monitoring Gaussian ($D^c_t$) and Exponential ($M^c_t$) sequences.} 
\begin{center}
\subfloat[$D^c_t$]{
\begin{tabular}{rrrrrrrr}
  \hline
t & 100 & 200 & 370 & 500 & 1000 & 2000 & 5000 \\ 
  \hline
21 & 13.2 & 14.8 & 16.1 & 16.8 & 18.1 & 19.7 & 21.5 \\ 
  22 & 13.1 & 14.7 & 16.0 & 16.7 & 18.0 & 19.6 & 21.5 \\ 
  23 & 13.0 & 14.6 & 15.9 & 16.6 & 18.0 & 19.6 & 21.4 \\ 
  24 & 12.9 & 14.5 & 15.8 & 16.5 & 17.9 & 19.5 & 21.4 \\ 
  25 & 12.8 & 14.3 & 15.7 & 16.4 & 17.8 & 19.4 & 21.3 \\ 
  26 & 12.7 & 14.3 & 15.7 & 16.3 & 17.8 & 19.3 & 21.2 \\ 
  27 & 12.6 & 14.2 & 15.6 & 16.2 & 17.7 & 19.2 & 21.2 \\ 
  28 & 12.5 & 14.1 & 15.5 & 16.2 & 17.6 & 19.2 & 21.1 \\ 
  29 & 12.5 & 14.1 & 15.5 & 16.2 & 17.6 & 19.2 & 21.0 \\ 
  30 & 12.4 & 14.0 & 15.5 & 16.2 & 17.6 & 19.2 & 21.0 \\ 
  50 & 12.3 & 13.9 & 15.4 & 16.1 & 17.7 & 19.3 & 21.2 \\ 
  60 & 12.4 & 14.0 & 15.5 & 16.2 & 17.8 & 19.3 & 21.3 \\ 
  80 & 12.3 & 14.1 & 15.5 & 16.2 & 17.8 & 19.4 & 21.4 \\ 
  100 & 12.4 & 14.1 & 15.5 & 16.3 & 17.9 & 19.4 & 21.6 \\ 
  200 & 12.4 & 14.1 & 15.6 & 16.4 & 18.0 & 19.6 & 21.6 \\ 
  300 & 12.4 & 14.1 & 15.7 & 16.4 & 18.0 & 19.6 & 21.5 \\ 
  400 & 12.1 & 14.0 & 15.6 & 16.3 & 18.0 & 19.7 & 21.8 \\ 
  500 & 12.2 & 14.2 & 15.7 & 16.4 & 18.0 & 19.6 & 21.7 \\ 
  600 & 12.3 & 14.1 & 15.6 & 16.4 & 18.1 & 19.7 & 21.8 \\ 
  700 & 12.3 & 14.3 & 15.6 & 16.4 & 18.0 & 19.6 & 21.7 \\ 
  800 & 12.3 & 14.1 & 15.6 & 16.3 & 18.0 & 19.6 & 21.7 \\ 
   \hline
\end{tabular}
}
\subfloat[$M^c_t$]{
\begin{tabular}{rrrrrrrr}
  \hline
t & 100 & 200 & 370 & 500 & 1000 & 2000 & 5000 \\ 
  \hline
21 & 5.2 & 5.9 & 6.5 & 6.8 & 7.4 & 8.0 & 8.9 \\ 
  22 & 5.1 & 5.8 & 6.4 & 6.7 & 7.3 & 7.9 & 8.8 \\ 
  23 & 5.0 & 5.6 & 6.2 & 6.5 & 7.2 & 7.8 & 8.7 \\ 
  24 & 4.8 & 5.5 & 6.1 & 6.4 & 7.1 & 7.7 & 8.6 \\ 
  25 & 4.7 & 5.4 & 6.0 & 6.3 & 7.0 & 7.7 & 8.5 \\ 
  26 & 4.6 & 5.3 & 5.9 & 6.2 & 6.9 & 7.6 & 8.4 \\ 
  27 & 4.5 & 5.2 & 5.8 & 6.1 & 6.8 & 7.5 & 8.4 \\ 
  28 & 4.4 & 5.1 & 5.8 & 6.1 & 6.7 & 7.4 & 8.3 \\ 
  29 & 4.4 & 5.1 & 5.7 & 6.0 & 6.7 & 7.4 & 8.3 \\ 
  30 & 4.3 & 5.0 & 5.7 & 6.0 & 6.7 & 7.4 & 8.3 \\ 
  50 & 4.0 & 4.8 & 5.5 & 5.8 & 6.5 & 7.2 & 8.2 \\ 
  60 & 4.0 & 4.8 & 5.5 & 5.8 & 6.5 & 7.3 & 8.2 \\ 
  80 & 4.0 & 4.8 & 5.5 & 5.8 & 6.6 & 7.3 & 8.2 \\ 
  100 & 4.1 & 4.9 & 5.6 & 5.9 & 6.6 & 7.4 & 8.3 \\ 
  200 & 4.1 & 4.9 & 5.6 & 5.9 & 6.7 & 7.4 & 8.4 \\ 
  300 & 4.0 & 4.9 & 5.6 & 5.9 & 6.6 & 7.4 & 8.4 \\ 
  400 & 4.1 & 4.8 & 5.5 & 5.9 & 6.7 & 7.5 & 8.4 \\ 
  500 & 4.1 & 4.9 & 5.5 & 5.9 & 6.7 & 7.4 & 8.4 \\ 
  600 & 4.1 & 4.8 & 5.6 & 5.9 & 6.7 & 7.5 & 8.4 \\ 
  700 & 4.1 & 4.9 & 5.5 & 5.9 & 6.7 & 7.4 & 8.4 \\ 
  800 & 4.1 & 4.8 & 5.6 & 5.9 & 6.7 & 7.4 & 8.4 \\ 
\hline
\end{tabular}
}
\end{center}
\label{tab:thresholds}
\end{table*}

\bibliographystyle{spbasic} 
\bibliography{jabref}	
\end{document}